\newtheorem{thm}{Theorem}
\newtheorem{definition}{Definition}
\newcommand{\be}{\begin{equation}}
\newcommand{\ee}{\end{equation}}
\begin{document}

\rightline{\tt Quantum Science and Technology {\bf 5}, 03LT01 (2020)}

\vspace{0.2in}

\title{Rigorous measurement error correction}

\author{Michael R. Geller}
\affiliation{Center for Simulational Physics, University of Georgia, Athens, Georgia 30602, USA}
\date{\today}

\begin{abstract}
We review an experimental technique used to correct state preparation and measurement errors on gate-based quantum computers, and discuss its rigorous justification. Within a specific {\it biased quantum measurement} model, we prove that nonideal measurement of an arbitrary $n$-qubit state is equivalent to ideal projective measurement followed by a classical Markov process $\Gamma$ acting on the output probability distribution. Measurement errors can be removed, with rigorous justification, if $\Gamma$ can be learned and inverted.  We show how to obtain $\Gamma$ from gate set tomography (R. Blume-Kohout {\it et al.}, arXiv:1310.4492) and apply the rigorous correction technique to single IBM Q superconducting qubits.
\end{abstract}

\maketitle

\section{Introduction}

A well-known technique for mitigating state preparation and measurement (SPAM) errors is to measure the transition (overlap-squared) matrix $T$ between all initial and final classical states, and minimize $\| T \, p_{\rm corr} - p_{\rm noisy} \|_2^2$ subject to constraints $0 \le p_{\rm corr}(x) \le 1$ and $\| p_{\rm corr} \|_1 =1$ to correct subsequently measured probability distributions \cite{BialczakNatPhys10,NeeleyNat10,DewesPRL12,14114994,160304512,SongPRL17,181102292,190505720,180411326,TannuIEEE19,190411935,190503150,190708518,191000129,191001969,191113289,200104449,200109980}. Although there are practical limitations of the method that have prevented application to large registers, in this work we address a different issue: its rigorous justification. Suppose that, after a quantum computation, the probability distribution on one or more qubits has to be accurately estimated. For example, a recent experiment by Gong {\it et al.} \cite{181102292} applied the $T$ matrix method to mitigate SPAM errors for the estimation of the fidelity of a 12-qubit cluster state, which must exceed $\frac{1}{2}$ to certify genuine multipartite entanglement. Is correcting SPAM this way is reliable? How can we be sure that we are not corrupting a measured probability distribution?

We answer this question in two parts: First we define a specific nonideal quantum measurement model for which rigorous error correction is possible, at least in principle. Within this model, the error correction protocol is identical to that of the $T$ matrix method, except with a doubly stochastic matrix $\Gamma \neq T.$ Second, we discuss how to obtain $\Gamma$. In the unlikely context that classical states $\lbrace |x\rangle \rbrace_{x \in  \lbrace 0,1 \rbrace^n}$  can be prepared with negligible error, $\Gamma$ can be directly measured experimentally (in this case $\Gamma \! = \! T$). Usually, however, $\Gamma$ needs to be obtained by some other means, such tomography. Here we show how to obtain $\Gamma$ from gate set tomography \cite{BlumeKohout13104492,BlumeKohoutNat17} and apply the rigorous correction technique to single IBM Q superconducting qubits. The results of $\Gamma$ matrix measurement correction are also compared with that of the $T$ matrix, highlighting some deficiencies of the latter.

\section{Biased measurement model}

In this work we describe the nonideal measurement of a register of $n$ qubits by a set of $2^n$ positive operator-valued measure (POVM) elements 
%\begin{equation}
%\lbrace E_x :  x \in \lbrace 0,1 \rbrace^n \rbrace
%\end{equation}
\begin{equation}
\lbrace E_x \rbrace_{ x \in \lbrace 0,1 \rbrace^n}
\end{equation}
satisfying
\begin{equation}
\sum_{x \in \lbrace 0,1 \rbrace^n} E_x = I ,
\label{defPOVMnormalization}
\end{equation} 
and such that, in the ideal limit, 
\begin{equation}
E_x \rightarrow |x\rangle \langle x |
= |x_1\rangle \langle x_1 | \otimes |x_2\rangle \langle x_2 | \otimes \cdots \otimes |x_n\rangle \langle x_n |.
\end{equation} 
More precisely, we assume that each noisy measurement operator is closest in Frobenius norm to a unique classical-state projector $|x\rangle \langle x |$, and we label the measurement operator by that projector. Therefore, $E_x$ is the noisy version of the projector $|x\rangle \langle x |$.

\begin{definition}
Let $x \in \lbrace 0,1 \rbrace^n$ be a classical state and $\lbrace E_x \rbrace_x$ be a $2^n$-outcome POVM acting on $n$ qubits. We call $\lbrace E_x \rbrace_x$ a biased measurement model if every element $E_x $ is diagonal in the classical basis.
\end{definition}

The biased measurement model is rich enough to include the multiqubit crosstalk errors studied in \cite{BialczakNatPhys10,NeeleyNat10,200109980}. However it excludes measurement-frame orientation errors, such as a $z$ basis measurement with unintended tilt. The biased measurement model is expected to be a good approximation for dispersively measured superconducting qubits, where the dominant readout error comes from energy relaxation ($T_1$ decay) during qubit readout. Tomography data presented in Table \ref{gstSummaryTable} below is consistent with this expectation.

\begin{thm}
[Biased measurement]
Let $\lbrace E_x \rbrace_x$ be a biased measurement model for a register of $n$ qubits. Then nonideal quantum measurement (according to this model) of an arbitrary $n$-qubit state is equivalent to ideal projective measurement followed by a classical Markov process $\Gamma$ acting on the output probability distribution. 
The equivalence is illustrated in Fig.~\ref{CircuitIdentity}.
\label{BiasedMeasurementTheorem}
\end{thm}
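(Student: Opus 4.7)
The plan is to show that, under the diagonality hypothesis of Definition 1, the outcome probability $p(x) = \mathrm{Tr}(E_x \rho)$ under the biased POVM depends on $\rho$ only through its classical-basis diagonal, and that this dependence is linear with stochastic-matrix coefficients. First I would expand each $E_x$ in its diagonal form
\be
E_x \;=\; \sum_{y \in \lbrace 0,1 \rbrace^n} \Gamma_{xy} \, |y\rangle\langle y|, \qquad \Gamma_{xy} := \langle y | E_x | y \rangle,
\ee
which is permitted precisely because $E_x$ is diagonal in the classical basis. Substituting into $p(x) = \mathrm{Tr}(E_x \rho)$ annihilates all off-diagonal entries of $\rho$ and yields $p(x) = \sum_y \Gamma_{xy} \, \langle y | \rho | y \rangle = \sum_y \Gamma_{xy} \, p_{\mathrm{ideal}}(y)$, where $p_{\mathrm{ideal}}(y) = \langle y | \rho | y \rangle$ is exactly the distribution produced by ideal projective measurement in the classical basis. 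This is the claimed two-step composition: ideal projection, followed by the classical map $\Gamma$.

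Next I would verify that $\Gamma$ is a bona fide Markov transition. Nonnegativity $\Gamma_{xy} \ge 0$ follows from $E_x \ge 0$ via $\langle y | E_x | y \rangle \ge 0$. Column normalization follows immediately from POVM completeness (\ref{defPOVMnormalization}):
\be
\sum_x \Gamma_{xy} \;=\; \langle y | \Bigl( \sum_x E_x \Bigr) | y \rangle \;=\; \langle y | I | y \rangle \;=\; 1,
\ee
so $\Gamma$ is column-stochastic and hence a classical Markov process acting on the ideal outcome distribution.

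There is no serious technical obstacle: the diagonality hypothesis is doing essentially all of the work, because it guarantees that $\mathrm{Tr}(E_x \rho)$ reads only the diagonal of $\rho$, which is precisely the information an ideal projective measurement in the classical basis returns. The one subtlety worth spelling out is the labeling convention established just before the theorem — each noisy $E_x$ is matched to its unique Frobenius-closest classical projector $|x\rangle\langle x|$ — because it is this matching that lets $\Gamma$ be interpreted as a noise channel sending ideal labels $y$ to noisy labels $x$. I would also remark that the stronger doubly stochastic property invoked in the introduction is not forced by Definition 1 itself; it additionally requires $\mathrm{Tr}(E_x) = 1$ for every $x$, which is natural when each $E_x$ is a small perturbation of a rank-one classical projector but does not follow from diagonality and POVM completeness alone.
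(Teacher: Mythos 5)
Your proposal is correct and follows essentially the same route as the paper: expand the diagonal $E_x$ in the classical basis, insert into $\mathrm{Tr}(E_x\rho)$ so that only the diagonal of $\rho$ survives, and identify the coefficients with $\Gamma(x|x') = \mathrm{Tr}(E_x|x'\rangle\langle x'|)$; the paper's three-line computation uses the diagonality implicitly in passing from $\langle x'|E_x\rho|x'\rangle$ to $\Gamma(x|x')\,p_{\rm ideal}(x')$, which you simply make explicit, and your verification of column-stochasticity is deferred by the paper to the following section. Your closing remark that double stochasticity additionally requires $\mathrm{Tr}(E_x)=1$ and does not follow from diagonality plus completeness is accurate (and indeed the reported $E_0$ matrices in Table~\ref{gstSummaryTable} have trace exceeding one), but it concerns a claim outside the theorem itself.
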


\begin{proof}
Let $\rho$ be the state of the $n$-qubit register at the end of a circuit, just before measurement. The experimentally observed probability for observing classical state $x$ is
\begin{eqnarray}
p_{\rm noisy}(x) &=& {\rm Tr} \, (E_x \rho) \nonumber \\
&=& \sum_{x' \in \lbrace 0,1 \rbrace^n} \langle x' | E_x \rho | x' \rangle \nonumber \\
&=& \sum_{x' \in \lbrace 0,1 \rbrace^n} \Gamma(x|x') \,p_{\rm ideal}(x'),
\end{eqnarray}
where $p_{\rm ideal}(x) = \langle x | \rho | x\rangle$ is the uncorrupted probability distribution and 
\begin{equation}
\Gamma(x|x') := {\rm Tr}(E_x \, |x'\rangle \langle x'|)
\label{GammaDefinition}
\end{equation}
are the matrix elements of the Markov process. Regarding $p_{\rm noisy}(x)$ and $p_{\rm ideal}(x)$ as a vectors, the measurement error therefore acts via multiplication by the $2^n \! \times 2^n$ matrix $\Gamma$: 
\begin{equation}
p_{\rm ideal} \mapsto
p_{\rm noisy} = \Gamma \, p_{\rm ideal},
\end{equation}
as stated. $\Box$
\end{proof}

\begin{figure}
\includegraphics[width=7.0cm]{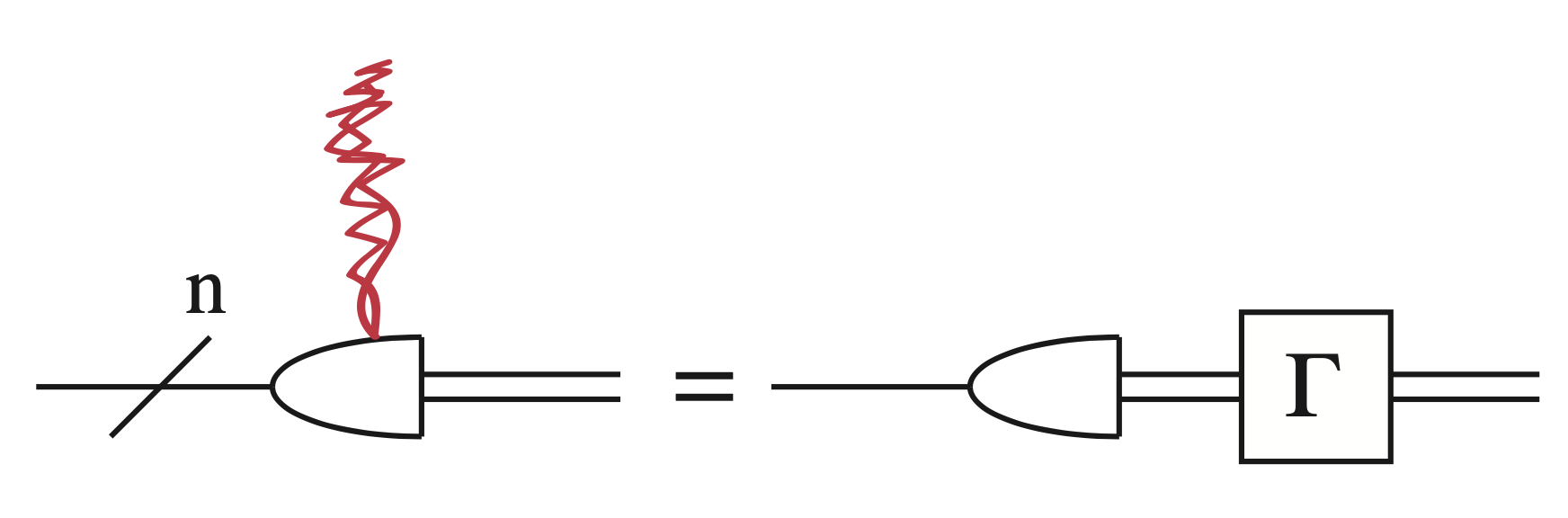} 
\caption{Circuit identity illustrating the biased measurement theorem. The measurement operation on the left side of the identity is noisy. The measurement on the right side is an ideal projective measurement. $\Gamma$ is an $2^n \! \times \! 2^n$ stochastic matrix acting on the output probability distribution.}
\label{CircuitIdentity}
\end{figure} 

\section{$\Gamma$ versus $T$}

The $\Gamma$ matrix used for rigorous measurement correction is defined in (\ref{GammaDefinition}).  The transition matrix can be written similarly as
\begin{equation}
T(x|x') = {\rm Tr}(E_x \, \rho_{x'}),
\label{defT}
\end{equation}
where $\rho_{x'}$ is the noisy implementation of the classical state $|x'\rangle \langle x'|$. Recall that the $T$ matrix is measured on an empty (identity) circuit. In the absence of any SPAM errors, both $T$ and $\Gamma$ are equal to the $2^n \! \times \! 2^n$ identity matrix $I$:
\begin{equation}
 T(x|x') = \Gamma(x|x') = \delta_{xx'}. \ \ \ ({\rm ideal})
\end{equation}
Comparing (\ref{GammaDefinition}) and  (\ref{defT}), we see that $\Gamma$ would equal $T$ if the classical states are prepared perfectly (and the biased measurement model is valid). 

The $T$ matrix  is {\it left stochastic}, consisting of columns of nonnegative real numbers summing to one,
\begin{equation}
\sum_x T(x|x') = {\rm Tr} \big( \sum_x E_x \, \rho_{x'} \big) = 1.
\label{defStochastic}
\end{equation}
This follows from (\ref{defPOVMnormalization}). The $\Gamma$ matrix is both left and right stochastic, also having rows summing to one,
\begin{equation}
\sum_x \Gamma(x|x') = \sum_{x'} \Gamma(x|x') = 1.
\label{defDoubleStochastic}
\end{equation}
However the matrix inverses $T^{-1}$ and $\Gamma^{-1}$ (if they exist) are not necessarily stochastic.

\section{Measurement error correction}
\label{measurement error correction}

Within the biased measurement model, Theorem \ref{BiasedMeasurementTheorem} shows that if $\Gamma$ can be measured and inverted, the measurement errors can be rigorously corrected by applying $\Gamma^{-1}$ to $p_{\rm noisy}(x)$. However the existence of $\Gamma^{-1}$ is not guaranteed by the biased measurement model. We treat the cases of invertible and singular $\Gamma$ separately. 

\subsection{Invertible $\Gamma$}

If $\Gamma$ is nonsingular the corrected probability distribution is 
\begin{equation}
p_{\rm corr} = \Gamma^{-1} \, p_{\rm noisy}.
\label{defGammaErrorCorrection}
\end{equation}
If the actual measurement operators are diagonal and $\Gamma$ is known precisely, then $p_{\rm corr}$ is the probability distribution that would be observed with ideal projective measurement. This means that  $p_{\rm corr}$ is automatically a proper probability distribution satisfying 
$0 \le p_{\rm corr}(x) \le 1$ and $\| p_{\rm corr} \|_1 =1$.

A weakness of this approach is that sampling errors in $p_{\rm noisy}$ can lead to an unphysical $p_{\rm corr}$, in particular, to negative values. However $\| p_{\rm corr} \|_1 =1$ is guaranteed for any normalized $p_{\rm noisy}$ because $ \Gamma$ is a stochastic matrix, which conserves the 1-norm when acting on vector with nonnegative entries (such as a probability distibution). If negative values are obtained, an alternative to (\ref{defGammaErrorCorrection}) must  be applied. The most direct is to calculate
\begin{equation}
p_{\rm corr} = {\rm argmin}_p \, \| \Gamma \, p - p_{\rm noisy} \|_2^2
\label{defLeastSquares}
\end{equation}
subject to the physicality constraints $0 \le p(x) \le 1$ and $\| p \|_1 =1$. This alternative is optimal in the sense that any protocol returning a physical $p_{\rm corr}$ must be equivalent to (\ref{defLeastSquares}) or must not fit the data as well (in 2-norm).

\subsection{Singular $\Gamma$}

In this case rigorous measurement correction is not possible. However the optimal alternative (\ref{defLeastSquares}) can still be applied in this case. While not exactly reversing the effect of the measurement error, the protocol (\ref{defLeastSquares}) is still likely to be a useful heuristic error mitigation technique.

\section{Measuring $\Gamma$}

Although the $T$ matrix can be directly measured experimentally, the $\Gamma$ matrix cannot (unless state preparation errors are completely negligible).  Here we discuss and apply an approach for estimating $\Gamma$ with gate set tomography \cite{BlumeKohout13104492,BlumeKohoutNat17} using our BQP data acquisition software combined with pyGSTi \cite{pygsti}. Gate set tomography works by measuring a large batch of circuits, designed to amplify errors by repeating carefully chosen germ sequences, and then performing a maximum likelihood fit of the resulting data to a set of stationary models for the prepared state, gates, and measurement operators. We measured 589 distinct circuits up to length 16 (18 including fiducials), and used the robust CPTP estimates to mitigate non-Markovian effects such as drift and leakage. The gate set included imperfect $\pi/2$ rotations about $x$ and $y$. To suppress statistical errors all circuits were measured with 32k measurement samples (each distinct circuit was measured four times with 8k samples). The $\Gamma$ matrix was then evaluated directly from (\ref{GammaDefinition}). Our results for the IBM Q chip ibmq\_essex are summarized in Table \ref{gstSummaryTable}. In all cases the estimated  $E_0$ and $E_1$ are very nearly diagonal, validating the biased measurement model.

\begin{widetext}

\begin{table}[htb]
\centering
\caption{Reconstructed $|0\rangle$ state and two-outcome POVM estimated by gate set tomography for the IBM Q device ibmq\_essex \cite{dataTakingNote}. $T$ is the concurrently measured transition matrix (\ref{defT}). $\Gamma$ is calculated from (\ref{GammaDefinition}).}
\begin{tabular}{|c|c|c|c|c|}
\hline
ibmq\_essex qubit& $\rho_{|0\rangle}$ & $E_0$ & $T$ &  $\Gamma$ \\
\hline 
$Q_0$   
& $\begin{pmatrix} 0.9792 & -0.0390 \\ -0.0390 & 0.0208 \\ \end{pmatrix}$   
& $\begin{pmatrix} 1.0000 & -0.0020 \\ -0.0020 & 0.0462 \\ \end{pmatrix}$ 
& $\begin{pmatrix} 0.9798 & 0.0606 \\ 0.0202 & 0.9394 \\ \end{pmatrix}$ 
& $\begin{pmatrix} 1.0000 & 0.0462 \\ 0.0000 & 0.9538 \\ \end{pmatrix}$  
\\ 
\hline 
$Q_1$ 
& $ \begin{pmatrix} 0.9703 & -0.0518 \\ -0.0518 & 0.0297 \\ \end{pmatrix} $   
& $\begin{pmatrix} 1.0000 & 0.0028 \\ 0.0028 & 0.0718 \\ \end{pmatrix} $ 
& $ \begin{pmatrix} 0.9793 & 0.0692 \\ 0.0207 & 0.9308 \\ \end{pmatrix}$ 
& $ \begin{pmatrix} 1.0000 & 0.0718 \\ 0.0000 & 0.9282 \\ \end{pmatrix} $ 
\\
\hline
$Q_2$ 
& $ \begin{pmatrix} 0.9928 & -0.0037 \\ -0.0037 & 0.0072 \\ \end{pmatrix} $   
& $ \begin{pmatrix} 1.0000 & -0.0005 \\ -0.0005 & 0.0650 \\ \end{pmatrix} $ 
& $ \begin{pmatrix} 0.9928 & 0.0801 \\ 0.0072 & 0.9199 \\ \end{pmatrix} $ 
& $ \begin{pmatrix}1.0000 & 0.0650 \\ 0.0000 & 0.9350 \\ \end{pmatrix} $ 
\\
\hline
$Q_3$ 
& $ \begin{pmatrix} 0.9839 & -0.0548 \\ -0.0548 & 0.0161 \\ \end{pmatrix} $   
& $ \begin{pmatrix} 1.0000 & -0.0004 \\ -0.0004 & 0.0413 \\ \end{pmatrix} $ 
& $ \begin{pmatrix} 0.9837 & 0.0597 \\ 0.0163 & 0.9403 \\ \end{pmatrix} $ 
& $ \begin{pmatrix} 1.0000 & 0.0413 \\ 0.0000 & 0.9587 \\ \end{pmatrix} $ 
\\
\hline
$Q_4$ 
& $ \begin{pmatrix} 0.8590 & -0.0056 \\ -0.0056 & 0.1410 \\ \end{pmatrix} $   
& $ \begin{pmatrix} 0.9539 & -0.0022 \\ -0.0022 & 0.0680 \\ \end{pmatrix} $ 
& $ \begin{pmatrix} 0.8508 & 0.1599 \\ 0.1492 & 0.8401 \\ \end{pmatrix}$ 
& $ \begin{pmatrix} 0.9539 & 0.0680 \\ 0.0461 & 0.9320 \\ \end{pmatrix} $ 
\\
\hline
\end{tabular}
\label{gstSummaryTable}
\end{table}

\end{widetext}

\section{Rigorous error correction for single qubits}

Here we apply rigorous measurement error correction to the single-qubit states $|0\rangle$,  $|1\rangle$, and  $|+\rangle$. In each case we compute the Pauli expectation value 
\begin{equation}
\langle Z \rangle = {\rm prob}(0) - {\rm prob}(1)
\end{equation} 
before and after error correction. The results are shown in Table \ref{correctedZTable}. In one case (the $|1\rangle$ state on qubit $Q_1$) negative probabilities were encountered after inverting $\Gamma$, leading to an unphysical result $\langle Z \rangle = -1.0056$; in this case we applied the least-squares minimization protocol (\ref{defLeastSquares}).

\begin{table}[htb]
\centering
\caption{Error corrected Pauli expectation value $\langle Z \rangle$ on ibmq\_essex qubits $\lbrace Q_0, Q_1, Q_2, Q_3, Q_4 \rbrace$ for three prepared states. The $T$ matrix column gives $\langle Z \rangle$ after applying standard $T$ matrix error correction. The $\Gamma$ matrix column gives the results after rigorous measurement correction. The asterisk indicates that unphysical negative probabilities were encountered after the matrix inversion in (\ref{defGammaErrorCorrection}), so the alternative
(\ref{defLeastSquares}) was applied.}
\begin{tabular}{|c|c|}
\hline
$Q_0$ & 
\begin{tabular}{|c|c|c|c|c|}
\hline
State & Ideal & Raw data & $T$ matrix &  $\Gamma$ matrix \\
\hline 
$|0\rangle$ &  1 & 0.9596 & 1.0000 & 0.9576  \\
\hline
$|1\rangle$ &  -1 &-0.8788 & -1.0000 & -0.9698  \\
\hline
$|+\rangle$ &  0 & 0.1301   & 0.0976  & 0.0879  \\
\hline
\end{tabular}
\\ 
\hline
$Q_1$ & 
\begin{tabular}{|c|c|c|c|c|}
\hline
State & Ideal & Raw data & $T$ matrix &  $\Gamma$ matrix \\
\hline 
$|0\rangle$ &  1 & 0.9586 & 1.0000 & 0.9554  \\
\hline
$|1\rangle$ &  -1 & -0.8616 & -1.0000 & -1.0000$^*$  \\
\hline
$|+\rangle$ &  0 & 0.1936  & 0.1595 & 0.1313 \\
\hline
\end{tabular}
\\ 
\hline
$Q_2$ & 
\begin{tabular}{|c|c|c|c|c|}
\hline
State & Ideal & Raw data & $T$ matrix &  $\Gamma$ matrix \\
\hline 
$|0\rangle$ &  1 & 0.9857 & 1.0000 & 0.9847  \\
\hline
$|1\rangle$ &  -1 &-0.8399 & -1.0000 & -0.9677 \\
\hline
$|+\rangle$ &  0 & 0.0728   & -0.0001  & 0.0084  \\
\hline
\end{tabular}
\\ 
\hline
$Q_3$ & 
\begin{tabular}{|c|c|c|c|c|}
\hline
State & Ideal & Raw data & $T$ matrix &  $\Gamma$ matrix \\
\hline 
$|0\rangle$ &  1 & 0.9674 & 1.0000 & 0.9660  \\
\hline
$|1\rangle$ &  -1 & -0.8805 & -1.0000 & -0.9616  \\
\hline
$|+\rangle$ &  0 & 0.1513  & 0.1167 & 0.1147  \\
\hline
\end{tabular}
\\ 
\hline
$Q_4$ & 
\begin{tabular}{|c|c|c|c|c|}
\hline
State & Ideal & Raw data & $T$ matrix &  $\Gamma$ matrix \\
\hline 
$|0\rangle$ &  1 & 0.7017 & 1.0000 & 0.7674  \\
\hline
$|1\rangle$ &  -1 & -0.6802 & -1.0000 & -0.7924  \\
\hline
$|+\rangle$ &  0 & 0.0356 & 0.0360 & 0.0156  \\
\hline
\end{tabular}
\\ 
\hline
\end{tabular}
\label{correctedZTable}
\end{table}

Table \ref{correctedZTable} also compares $\Gamma$ matrix SPAM correction with $T$ matrix SPAM correction. The $T$ matrix, by construction,  always corrects classical states $|0\rangle$ and $|1\rangle$ to their ideal values. Unfortunately this is a defect of the $T$ matrix method and is not physical: To demonstrate this for the $|0\rangle$ state, we use its density matrix as estimated by gate set tomography (the $\rho_{|0\rangle}$ column in Table \ref{gstSummaryTable}) to calculate $\langle Z \rangle = {\rm Tr}(\rho_{|0\rangle}Z)$. As shown in Table \ref{zeroStateTable}, the rigorous results are much closer to the tomographic estimates. 

On the nonclassical $|+\rangle$ state, the $T$ and $\Gamma$ matrix corrections are similar, but differ by 1-3\%. This implies that in current noisy devices, the $T$ matrix, while extremely convenient, has limited accuracy.

\begin{table}[htb]
\centering
\caption{Pauli expectation values of the $|0\rangle$ state compared with gate set tomography (GST). The $T$ matrix and $\Gamma$ matrix columns are copied from Table \ref{correctedZTable}.}
\begin{tabular}{|c|c|c|c|}
\hline
ibmq\_essex qubit   & $\langle Z \rangle \  \ T \ {\rm matrix}$ & $\langle Z \rangle  \ \ \Gamma \ {\rm matrix}$  &  $\langle Z \rangle  \ {\rm GST}$  \\ 
\hline
$Q_0$ & 1.0000  &  0.9576 & 0.9584 \\ 
\hline
$Q_1$ & 1.0000  & 0.9554  & 0.9406 \\ 
\hline
$Q_2$ & 1.0000  &  0.9847 & 0.9857 \\ 
\hline
$Q_3$ & 1.0000  & 0.9660  & 0.9677 \\ 
\hline
$Q_4$ & 1.0000  & 0.7674  & 0.7179 \\ 
\hline
\end{tabular}
\label{zeroStateTable}
\end{table}

We have performed several simulations to assess the accuracy of the single-qubit results: For the first simulation we assume a biased measurement model with 
\begin{equation}
E_0 = 
\begin{pmatrix}
1 & 0 \\
0 & 0.05 \\
\end{pmatrix}
\ \ {\rm and} \ \ 
E_1 = 
\begin{pmatrix}
0 & 0 \\
0 & 0.95 \\
\end{pmatrix},
\label{nonidealPOVMsimulation1}
\end{equation}
representative of the data in Table \ref{gstSummaryTable}, and estimate the effect of sampling errors on the accuracy of the corrected $\langle Z \rangle$ values in Table \ref{correctedZTable}. For a given randomly chosen pure or mixed state $\rho$, we: 
\begin{enumerate}

\item Compute the exact Pauli expectation value
\begin{equation}
\langle Z \rangle_{\rm exact}  = {\rm Tr} (Z \rho).
\end{equation}

\item Simulate the experimental estimation of the probability distribution 
\begin{equation}
p_{\rm noisy}(x) = {\rm Tr} (E_x \rho)
\end{equation}
using $N$ measurement samples and the biased measurement model  (\ref{nonidealPOVMsimulation1}). We write the result of the simulated estimation as ${\hat p}_{\rm noisy}(x)$, which differs from $p_{\rm noisy}(x)$ due to the presence of statistical fluctuations.

\item Apply $\Gamma$ matrix error correction to ${\hat p}_{\rm noisy}(x)$ using the exact $\Gamma$ matrix and constrained least-squares minimization (\ref{defLeastSquares}).

\item Calculate the corrected Pauli expectation value
\begin{equation}
\langle Z \rangle_{\rm corr}  = p_{\rm corr}(0) - p_{\rm corr}(1) .
\end{equation}

\item Calculate the magnitude of the error in the corrected Pauli expectation value, defined as

\begin{equation}
\Delta Z  = \big|  \langle Z \rangle_{\rm corr} - \langle Z \rangle_{\rm exact} \big| . \label{defDeltaZ}
\end{equation}

\end{enumerate}
The results from this protocol, averaged over 1000 density matrices sampled uniformly from the Bloch ball, is given in Table \ref{simulation1Table} for different values of $N$.
 
\begin{table}[htb]
\centering
\caption{Accuracy of $\langle Z \rangle$ error correction versus number of measurement samples $N$. The errors are reported as ${\overline{ \Delta Z}} \pm ( \sigma / {\sqrt N})$, where ${\overline{ \Delta Z}}$ and $\sigma$ are the calculated mean and standard deviation of $\Delta Z$ averaged over 1000 random density matrices. The IBM Q data presented in this paper was taken with 32k measurement samples.}
\begin{tabular}{|c|c|}
\hline
$N$  & $\Delta Z$  \\ 
\hline
1k & $2.4 \! \times \! 10^{-2} \, \pm  \,  6 \! \times \! 10^{-4}$   \\ 
8k & $8.2 \! \times \! 10^{-3}  \,  \pm  \,  2 \! \times \! 10^{-4}$   \\ 
32k & $4.1 \! \times \! 10^{-3}  \, \pm  \,  1 \! \times \! 10^{-4}$  \\ 
100k & $2.4 \! \times \! 10^{-3} \,  \pm   \,  6 \! \times \! 10^{-5}$  \\
\hline
\end{tabular}
\label{simulation1Table}
\end{table}

For the second simulation we repeated the first but with a matrix ${\hat \Gamma}$ that differs from $\Gamma$ due to the presence of statistical fluctuations. Here we assume that $\Gamma$ is obtained not by gate set tomography, but instead by a simulated measurement of the $T$ matrix with perfect initial state preparation and $N$ measurement samples. We find that the effect of using ${\hat \Gamma}$  instead of $\Gamma$ is too small to  affect the results in Table \ref{simulation1Table}.

For the third simulation we fixed $N \! = \! 32000$ (corresponding to the data presented in Tables \ref{gstSummaryTable}-\ref{zeroStateTable}) and allowed for small departures from the biased measurement model. In particular, we assume a family of POVMs of the form
\begin{equation}
E_0 = 
\begin{pmatrix}
1 & \eta \\
\eta & 0.05 \\
\end{pmatrix}
\ \ {\rm and} \ \ 
E_1 = 
\begin{pmatrix}
0 & -\eta \\
-\eta & 0.95 \\
\end{pmatrix},
\label{nonidealPOVMsimulation3}
\end{equation}
where $|\eta| \ll 1$. The results are given in Table \ref{simulation3Table}.

\begin{table}[htb]
\centering
\caption{Accuracy of $\langle Z \rangle$ error correction versus frame tilt $\eta$. The error is reported here as in Table \ref{simulation1Table}, again averaged over 1000 random density matrices.}
\begin{tabular}{|c|c|}
\hline
$\eta$  & $\Delta Z$  \\ 
\hline
$\pm 0.001$  & $4.1 \! \times \! 10^{-3}  \, \pm  \,  1 \! \times \! 10^{-4}$   \\ 
$\pm 0.002$  & $4.5 \! \times \! 10^{-3}  \, \pm  \,  1 \! \times \! 10^{-4}$   \\ 
$\pm 0.005$ & $5.9 \! \times \! 10^{-3}  \, \pm  \,  1 \! \times \! 10^{-4}$   \\ 
\hline
\end{tabular}
\label{simulation3Table}
\end{table}
 
The simulation results imply that the accuracy of $\langle Z \rangle$ error correction, as characterized by (\ref{defDeltaZ}), is about
\begin{equation}
\Delta Z \approx 0.004
\end{equation}
for much of the data presented here. This is significantly smaller than the 1-3\% differences typically observed between $Z$ expectation values corrected by the $T$ and $\Gamma$ matrices.

\section{Conclusion}

In this work we have argued that measurement errors can be fully corrected, with rigorous justification, if two conditions are met:
\begin{enumerate}

\item The noisy $2^n$-outcome POVM elements are diagonal in the classical basis.

\item The $\Gamma$ matrix, defined in (\ref{GammaDefinition}), can be estimated and inverted.

\end{enumerate}
If these conditions are met, the $\Gamma$ matrix should replace the commonly used $T$ matrix \cite{BialczakNatPhys10,NeeleyNat10,DewesPRL12,14114994,160304512,SongPRL17,181102292,190505720,180411326,TannuIEEE19,190411935,190503150,190708518,191000129,191001969,191113289,200104449,200109980}. The correction protocol is given by (\ref{defGammaErrorCorrection}). The data presented in Table \ref{correctedZTable} show that the $T$ and $\Gamma$ matrices produce significantly different results on current superconducting qubits. It will be interesting to apply this technique to other quantum computing architectures as well.

We have also provided tomographic evidence that the noisy measurement operators in {\it single} transmon qubits are nearly diagonal, in support of the biased measurement model for transmon qubits with dispersive readout. This is expected because the dominant measurement error comes from $T_1$ decay during qubit readout.

One approach for extending this technique to multiqubit circuits is to use a tensor product 
\begin{equation}
\Gamma_1 \otimes \Gamma_2 \otimes \cdots \otimes \Gamma_n
\end{equation}
of the single-qubit $\Gamma$ matrices calculated above. However this neglects important multiqubit measurement error correlations \cite{200109980}. A better approach would be to extend the POVM estimation to two or more qubits using gate set tomography or related techniques. This would also determine whether the biased measurement model is valid beyond single transmon qubits, which would require that crosstalk errors are primarily diagonal.

After completing this work we learned that a result equivalent to the biased measurement theorem was obtained by Maciejewski, Zimbor\'as, and Oszmaniec \cite{190708518}, but without addressing state-preparation errors. 

\begin{acknowledgments}

Data was taken using the {BQP} software package developed by the author. We thank Robin Blume-Kohout,  Sasha Korotkov, Filip Maciejewski, and Erik Nielsen for their private communication. We're also grateful to IBM Research for making their devices available to the quantum computing community. This work does not reflect the views or opinions of IBM or any of their employees. 

\end{acknowledgments}

\newpage

\bibliography{/Users/mgeller/Dropbox/bibliographies/algorithms,/Users/mgeller/Dropbox/bibliographies/applications,/Users/mgeller/Dropbox/bibliographies/books,/Users/mgeller/Dropbox/bibliographies/cm,/Users/mgeller/Dropbox/bibliographies/dwave,/Users/mgeller/Dropbox/bibliographies/control,/Users/mgeller/Dropbox/bibliographies/general,/Users/mgeller/Dropbox/bibliographies/group,/Users/mgeller/Dropbox/bibliographies/ions,/Users/mgeller/Dropbox/bibliographies/math,/Users/mgeller/Dropbox/bibliographies/ml,/Users/mgeller/Dropbox/bibliographies/nmr,/Users/mgeller/Dropbox/bibliographies/optics,/Users/mgeller/Dropbox/bibliographies/qec,/Users/mgeller/Dropbox/bibliographies/simulation,/Users/mgeller/Dropbox/bibliographies/software,/Users/mgeller/Dropbox/bibliographies/superconductors,/Users/mgeller/Dropbox/bibliographies/surfacecode,/Users/mgeller/Dropbox/bibliographies/tn,endnotes}

\end{document}